\begin{document}
\title{Collecting Coded Coupons over Generations\\[-3mm]}
\author{
\IEEEauthorblockN{Yao Li}
\IEEEauthorblockA{ECE Department, Rutgers University\\
%73 Brett Rd.\\
Piscataway NJ 08854\\
yaoli@winlab.rutgers.edu} \and \IEEEauthorblockN{Emina Soljanin}
\IEEEauthorblockA{Bell Labs, Alcatel-Lucent\\
%600 Mountain Avenue\\
Murray Hill NJ 07974, USA\\
emina@alcatel-lucent.com} \and \IEEEauthorblockN{Predrag
Spasojevi\'{c}}
\IEEEauthorblockA{ECE Department, Rutgers University\\
%73 Brett Rd.\\
Piscataway NJ 08854\\
spasojev@winlab.rutgers.edu}  \thanks{%\scriptsize
This work is supported by NSF CNS Grant No. 0721888.}}

\maketitle

\begin{abstract}
To reduce computational complexity and delay in randomized network
coded content distribution (and for some other practical reasons),
coding is not performed simultaneously over all content blocks but
over much smaller subsets known as generations. A penalty is
throughput reduction. We model coding over generations as the coupon
collector's {\em brotherhood} problem. This model enables us to
theoretically compute the expected number of coded packets needed
for successful decoding of the entire content, as well as a bound on
the probability of decoding failure, and further, to quantify the
tradeoff between computational complexity and throughput.
Interestingly, with a moderate increase in the generation size,
throughput quickly approaches link capacity. As an additional
contribution, we derive new results for the generalized collector's
brotherhood problem which can also be used for further study of many
other aspects of coding over generations.

\begin{comment}
Coding over generations in the unicast scenario under the coupon
collector's model is studied. The collection of coded packets
generated over $n$ generations is modeled as the collection of
multiple copies of $n$ distinct coupons by random sampling and with
replacement. The number of samplings needed to complete $h$ sets of
the $n$ coupons provides an estimate of the number of coded packets
necessary for successful decoding of all $n$ generations of size $h$
each. Results are given for both finite $n$ and $h$ and as either
$n$ or $h$ tend to infinity. Our theoretical estimate matches the
simulation results for the expected number of coded packets needed
for successful decoding. The theoretical results for the probability
of decoding failure in the asymptotic case where
$n\rightarrow\infty$ also coincide with simulation data. In
addition, our generalized result for the \emph{Collector's
Brotherhood} problem opens vast opportunities for extensions of the
proposed analysis to variations of the coding scheme and also to
more complex network topologies.
\end{comment}
\end{abstract}
\newtheorem{theorem}{Theorem}
\newtheorem{lemma}[theorem]{Lemma}
\newtheorem{claim}[theorem]{Claim}
\newtheorem{corollary}[theorem]{Corollary}
\newtheorem{remark}{Remark}

\section{Introduction}
%%\documentclass[12pt,conference,draftcls,onecolumn]{IEEEtran}
%\documentclass[conference]{IEEEtran}
%\usepackage{graphics,color}
%\usepackage{amsmath,amssymb,epsfig,dsfont}
%\usepackage{verbatim,subfigure}
%%\linespread{2}

%\section{Introduction}
In P2P systems, such as BitTorrent, content distribution involves
fragmenting the content at its source, and using swarming techniques
to disseminate the fragments among peers. Acquiring a file by
collecting its fragments can be to a certain extent modeled by the
classic coupon collector problem, which indicates some problems such
systems may have. For example, probability of acquiring a novel
fragment drops rapidly as the number of those already collected
increases. In addition, as the number of peers increases, it becomes
harder to do optimal scheduling of distributing fragments to
receivers. One possible solution is to use a heuristic that
prioritizes exchanges of locally rarest fragments. But, when peers
have only local information about the network, such fragments often
fail to match those that are globally rarest. The consequences are,
among others, slower downloads and stalled transfers. Consider file
at a server that consists of $N$ packets, and a client that chooses
a packet from the server uniformly at random with replacement. Then
the process of downloading the file is the classical coupon
collection, and the expected number of samplings needed to acquire
all $N$ coupons is $\mathcal{O}(N\log N)$ (see for example
\cite{feller}), which does not scale well for large $N$ (large
files).

Randomized coding systems, such as Microsoft's Avalanche \cite{avalanche}, attempt to lessen such problems in the following way. Instead of distributing the original
file fragments, peers produce linear combinations of the fragments they already
hold. These combinations are distributed together with a tag that describes the
coefficients used for the combination. When a peer has enough linearly independent
combinations of the original fragments, it can decode and build up the original file. The information packets can be decoded
when $N$ linearly independent equations have been collected. For a
large field, this strategy reduces the expected number of
required samplings from $\mathcal{O}(N\log N)$ to almost $N$ (see for example \cite{monograph}). However, the expense
to pay is the computational complexity. If the information packets
consist of $d$ symbols in $GF(q)$, it takes $\mathcal{O}(Nd)$
operations in $GF(q)$ to form a linear combination per coded packet,
and $\mathcal{O}(N^3+N^2d)$ operations, or, equivalently, $\mathcal{O}(N^2+Nd)$
operations per information packet, to decode the information packets
by solving linear equations.

We refer to the number of information packets combined in a coded
packet as the degree of the coded packet. The complexity  of
computations required  for solving the equations of information
packets depends on the average degree of coded packets. To reduce
computational complexity while maintaining the throughput gain
brought by coding, several approaches have been introduced seeking
to decrease the average degree of coded
packets~\cite{ustc,ltcodes,aminraptor}. Nevertheless, it is hard to
design distributed coding schemes with good throughput/complexity
tradeoff that further combine coded packets.
% Therefore, path
%diversity cannot be efficiently leveraged to improve throughput as
%with network coding \cite{yeungnetcode}.

Chou et al.~\cite{choupractical} proposed to partition information
packets into disjoint \textit{generations}, and combine only packets
from the same generation. The performance of codes with random
scheduling of disjoint generations was first theoretically analyzed
by Maymounkov et al.\ in \cite{petarchunked}, who referred to them
as \emph{chunked codes}. Chunked codes allow convenient encoding at
intermediate nodes, and are readily suitable for peer-to-peer file
dissemination. In \cite{petarchunked}, the authors used an
adversarial schedule as the network model and measured the code
performance by estimating the loss of ``code density'' as packets
are combined at intermediate nodes throughout the network.

In this work, we propose a way to analyze coding with generations
from a coupon collection perspective. Here, we view the generations as
coupons, and model the receiver who needs to acquire multiple linear equations
for each generation as a collector seeking to collect multiple copies of the same coupon. This collecting model is sometimes
refereed to as the \emph{collector's brotherhood} problem, as in \cite{brotherhood}. As a classical probability model which studies random sampling
of a population of distinct elements (coupons) with
replacement \cite{feller}, the \emph{coupon collector's} problem
finds application in a wide range of fields \cite{couponrevisited},
from testing biological cultures for contamination \cite{couponappbio} to probabilistic-packet-marking
(PPM) schemes for IP traceback problem in the Internet \cite{couponapptrace}.
We here use the ideas from the collector's brotherhood problem,
to derive refined results for expected
throughput for finite information length in unicast scenarios.
We describe the tradeoff
in throughput versus computational complexity of coding over
generations. Our results include the asymptotic code performance, by
proportionally increasing either the generation size or the number
of generations. The code performance mean concentration leads us to
a lower bound for the probability of decoding error. Our paper is
organized as follows:  Sec.~\ref{sec:generalmodel} describes the
unicast file distribution with random scheduling and introduces
pertaining results under the coupon collector's model.
Sec.~\ref{sec:throughput} studies the throughput performance of
coding with disjoint generations, including the expected performance
and the probability of decoding failure.  Sec.~\ref{sec:conclusion}
concludes.

\section{Coding over Generations} \label{sec:generalmodel}
%\section{Coding over Generations} \label{sec:generalmodel}
%%%%%%%%%%%%%%%%%%%%%%%%%%%%%%%%%%%%%%%%%%%%%%%%%%%%%%%%%%%%

\subsection{Coding over Generations in Unicast}

We consider the transmission of a file from a source to a receiver
over a unicast link using network coding over generations.

The file is divided into $N$ information packets, $p_1$, $p_2$,
$\dots$, $p_N$. Each packet $p_i(i=1,2,\dots,N)$ is represented as a
column vector of $d$ information symbols in Galois field $GF(q)$.
Packets are sequentially partitioned into $n=\frac{N}{h}$ {\it
generations} of size $h$, denoted as $G_1$, $G_2$, $\dots$, $G_n$.
$G_j=[p_{(i-1)h+1},p_{(i-1)h+2},\dots,p_{ih}]$ for $j=1,2,\dots,n$.
The receiver collects coded packets from the $n$ generations. The
coding scheme is described as follows:

\paragraph{Encoding}
In each transmission, the source first selects one of the $n$
generations with equal probability. Assume $G_j$ is chosen. Then the
source chooses a coding vector $e$ of length $h$, with each entry
chosen independently and equally probably from $GF(q)$. A new packet
$\bar{p}$ is then formed by linearly combining packets from $G_j$ by
$e$: $\bar{p}=G_je$. The coded packet $\bar{p}$ is then sent over
the communication link to the receiver along with the coding vector
$e$ and the generation index $j$.

\paragraph{Decoding}
The receiver gathers coded packets and their coding vectors. We say
that a receiver collects one more \emph{degree of freedom} for a
generation if it receives a coded packet from that generation with a
coding vector linearly independent of previously received coding
vectors from that generation. Once the receiver has collected $h$
degrees of freedom for a certain generation, it can decode that
generation by solving a system of $h$ linear equations in $h$
unknowns over $GF(q)$.

Note that the two extremes, generation sizes $h=N$ and $h=1$,
correspond respectively to full random linear network coding(i.e.
without generations) and not using coding at all(but with random
piece selection).

Since in the coding scheme described above, both the coding vector
and the generation whose packets are combined are chosen uniformly
at random, the code is inherently rateless. We measure the
throughput by the number of coded packets necessary for decoding all
the information packets.

\paragraph{Computational Complexity} It takes $\mathcal{O}(hd)$ operations
to form each linear combination of $h$ length-$d$ vectors(packets)
in $GF(q)$. The computational cost for encoding is then
$\mathcal{O}(hd)$ per coded packet. Meanwhile, it takes
$\mathcal{O}(h^3+h^2d)$ operations in $GF(q)$ to solve $h$ linearly
independent equations for the $h$ packets of one generation. Thus,
the cost for decoding is $\mathcal{O}(h^2+hd)$ per information
packet.

\subsection{Extension to a General Network}
In a general network, intermediate nodes follow a similar coding
scheme except that previously received coded packets are combined
instead of the original information packets. We leave out the
details since the analysis of code performance in general network
topologies is beyond the scope of this paper.

\subsection{Collector's Problem and Coding Over Generations}
 In a {\it coupon collector's problem}, a set of
distinct coupons are sampled with replacement. Consider the $n$
generations as $n$ distinct coupons, and collecting degrees of
freedom for generation $G_i$ is analogous to collecting copies of
the $i$th element in the coupon set. In the next section, we will
characterize the throughput performance of the code under the coupon
collector's probability model.

%The authors of \cite{petarchunked} analyzed the performance of the
%coding scheme described above via relating linear independency to
%``code density'' and characterizing ``density loss'' when coded
%packets are further combined at intermediate nodes throughout the
%network.

\begin{comment}
\paragraph{Packet Overhead} Included in each packet is a coding vector representable by $h\log_2q$ bits and
a generation index by $\log_2 n=\log_2\frac{N}{h}$ bits. The data
packet itself is of $d\log_2 q$ bits. The total packet overhead
$\log_2\frac{N}{h}+h\log_2q$ increases as $h$ increases.

\paragraph{Storage Space} At each node, since discarding any non-innovative
coded symbol has no effect on the performance of the code, for each
generation of size $h$, aside from the $hd\log_2 q$ bits to store
$h$ coded packets, $h^2\log_2 q$ bits are needed to store the
linearly independent coding vectors, for $h\ge 2$. These amount to
$Nd\log_2 q$ bits for storing coded packets and $Nh\log_2 q$ bits
for storing coding vectors for the whole file. The cost of storing
the coding vectors is waived for $h=1$(no coding).
\end{comment}

\section{Throughput of Coding over Generations} \label{sec:throughput}
%% section throughput tradeoff
%\section{Throughput and Computational Complexity} \label{sec:throughput}
We next study the throughput performance of coding
over generations in the simplest single-server-single-user scenario
by using the {\it collector's brotherhood} model \cite{brotherhood}.
Recall that, to successfully recover the file, the receiver has to collect
$h$ degrees of freedom for each generation.
%How many coded packets do we need to collect for each generation then?
For $i=1,\dots,n$, let $N_i$ be the number of
coded packets sampled from $G_i$ until $h$ degrees of freedom are collected.
Then, $N_i$s are i.i.d. random variables with the expected value \cite{monograph}
\begin{equation} \label{eq:hwait}
E[N_i]=\sum_{j=0}^{h-1}\frac{1}{1-q^{j-h}}.
\end{equation}
Approximating summation by integration, from (\ref{eq:hwait}) we get
\begin{align*}
E[N_i]\lessapprox&\int_0^{h-1}\frac{1}{1-q^{x-h}}dx+\frac{1}{1-q^{-1}}\\
=&h+\frac{q^{-1}}{1-q^{-1}}+\log_q\frac{1-q^{-h}}{1-q^{-1}}.
\end{align*}

\begin{comment}
\[\int\frac{1}{b+ce^{ax}}dx = \frac{x}{b}-\frac{1}{ab}\ln(b+ce^{ax})\]
$b=1$, $c=-q^{h}$, $a=\ln q$:
\begin{eqnarray*}
\int\frac{1}{1-q^{x-h}}dx &=& x-\log_q(1-q^{x-h})\\
\int_0^{h-1}\frac{1}{1-q^{x-h}}dx &=&
x-\log_q(1-q^{x-h})|_{0}^{h-1}\\
&=&h-1+\log_q\frac{1-q^{-h}}{1-q^{-1}}
\end{eqnarray*}
\end{comment}

Note that $N_i\le s$ means the $h\times s$ matrix formed by $s$
coding vectors of length $h$ as columns is of full row rank. For
$s<h$, $\textnormal{Pr}[N_i\le s]=0$. For $s\ge h$,
$\textnormal{Pr}[N_i\le s]$ equals the probability that for each
$k=1,2,\dots, h,$ the $k$th row in the matrix is linearly
independent of rows $1$ through $(k-1)$. Hence,
%a vector sampled uniformly at
%random from a linear space of dimension $s$ on $GF(q)$, falls
%outside the subspace of dimension $(k-1)$ spanned by rows $1$
%through $(k-1)$. Hence,
\begin{align}
\textnormal{Pr}[N_i\le
s]&=\prod_{k=0}^{h-1}\left(\left(q^s-q^k\right)/q^s\right)
%\frac{q^s-q^0}{q^s}\cdot
%\frac{q^s-q^1}{q^s}\dots \frac{q^s-q^{h-1}}{q^s}
=\prod_{k=0}^{h-1}(1-q^{k-s}).
\end{align}
We have the following Lemma \ref{thm:wait_ccdf} upper bounding the
complementary cumulative distribution function (CCDF) of $N_i$.

%\begin{eqnarray}
%P[N_i=s]&=&\prod_{i=0}^{h-1}(1-q^{i-s})-\prod_{i=0}^{h-1}(1-q^{i-(s-1)})\\
%        &=&\frac{q^h-1}{q^s-1}\prod_{i=0}^{h-1}(1-q^{i-s})
%\end{eqnarray}
\begin{lemma}\label{thm:wait_ccdf}
There exist positive constants $\alpha_{q,h}$ and $\alpha_{2,\infty}$ such that, for $s\ge h$,
\begin{align*}
\textnormal{Pr}[N_i&>s] = 1-\prod_{k=0}^{h-1}(1-q^{k-s})\\
&<1-\exp(-\alpha_{q,h} q^{-(s-h)})<1-\exp(-\alpha_{2,\infty}
q^{-(s-h)}).
%=1-[1-q^{-(s-h)}\frac{1-q^{-h}}{1-q^{-1}}+o(q^{-(s-h)})]\\
%&=&q^{-(s-h)}\frac{1-q^{-h}}{1-q^{-1}}+o(q^{-(s-h)}).
\end{align*}
\end{lemma}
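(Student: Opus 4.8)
The plan is to peel off the common ``$1-$'' from all three expressions. Since $x\mapsto 1-x$ is decreasing, the stated chain is equivalent to
\[
\prod_{k=0}^{h-1}(1-q^{k-s}) > \exp\!\left(-\alpha_{q,h}\,q^{-(s-h)}\right) > \exp\!\left(-\alpha_{2,\infty}\,q^{-(s-h)}\right).
\]
The right-hand inequality is then purely a statement about the two constants: because $q^{-(s-h)}>0$ and $\exp$ is increasing, it holds as soon as $\alpha_{q,h}<\alpha_{2,\infty}$. So the real content is the left-hand inequality, which I would attack by taking logarithms.

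Taking $-\log$ of the left inequality turns it into
\[
-\sum_{k=0}^{h-1}\ln(1-q^{k-s}) < \alpha_{q,h}\,q^{-(s-h)}.
\]
First I would apply the elementary bound $-\ln(1-x)\le x/(1-x)$, which is strict for $x>0$, term by term with $x=q^{k-s}$. Since $s\ge h$ forces $k-s\le -1$ for every $k\le h-1$, each denominator satisfies $1-q^{k-s}\ge 1-q^{-1}$, so the left side is strictly below $\frac{1}{1-q^{-1}}\sum_{k=0}^{h-1}q^{k-s}$. Summing the geometric series yields $\frac{q(1-q^{-h})}{(q-1)^2}\,q^{-(s-h)}$, which identifies the constant: the choice $\alpha_{q,h}=\frac{q(1-q^{-h})}{(q-1)^2}$ makes the left inequality hold, and strictly, since the logarithmic bound is strict for each positive term.

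It remains to exhibit a single $\alpha_{2,\infty}$ dominating all the $\alpha_{q,h}$. I would note that $\alpha_{q,h}$ increases in $h$ (the factor $1-q^{-h}$ grows up to $1$) and that $q\mapsto q/(q-1)^2$ decreases for $q>1$, since its derivative is $-(q+1)/(q-1)^3<0$. Hence over all fields $GF(q)$ with $q\ge 2$ and all $h$, the supremum of $\alpha_{q,h}$ is approached in the limit $q=2$, $h\to\infty$, giving $\alpha_{2,\infty}=2/(2-1)^2=2$; then $\alpha_{q,h}<2=\alpha_{2,\infty}$ strictly for every finite $h$, which closes the second inequality. I expect the only delicate point to be bookkeeping the strictness together with the uniform denominator bound $1-q^{k-s}\ge 1-q^{-1}$ over the correct range of $k$; once those are in place, the closed forms for both constants drop out and no genuinely hard estimate remains.
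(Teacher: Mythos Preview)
Your argument is correct, but it differs from the paper's in how the logarithmic sum is bounded. The paper expands each $\ln(1-q^{k-s})$ as its full Taylor series $-\sum_{j\ge1}\frac1j q^{(k-s)j}$, swaps the order of summation over $k$ and $j$, and factors out $q^{-(s-h)}$ from every term; bounding the remaining factor $q^{-(j-1)(s-h)}\le1$ then yields
\[
\alpha_{q,h}=-\ln\prod_{k=0}^{h-1}(1-q^{k-h})=-\ln\Pr[N_i\le h],
\qquad
\alpha_{2,\infty}=-\sum_{j\ge1}\ln(1-2^{-j})\approx1.242.
\]
You instead use the single-step inequality $-\ln(1-x)<x/(1-x)$ together with the uniform denominator bound $1-q^{k-s}\ge1-q^{-1}$, obtaining the closed forms $\alpha_{q,h}=q(1-q^{-h})/(q-1)^2$ and $\alpha_{2,\infty}=2$. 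Both routes are valid; the paper's constants are sharper (its $\alpha_{q,h}$ is exactly the value of the sum at $s=h$, so the bound is tight there), while yours are more elementary and fully explicit. A minor bonus of your approach is that the strict inequality $-\ln(1-x)<x/(1-x)$ for $x>0$ makes the first ``$<$'' in the lemma genuinely strict for every $s\ge h$, whereas the paper's bound actually collapses to equality at $s=h$.
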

\begin{proof}
Please refer to Appendix.
\end{proof}

\begin{comment}
\begin{eqnarray*}
\ln\prod_{i=0}^{h-1}(1-q^{i-s})&=&\sum_{i=s-h+1}^{s}\ln(1-q^{-i})\\
&=&-\sum_{i=s-h+1}^{s}\sum_{j=1}^{\infty}\frac{1}{j}q^{-ij}\\
&=&-\sum_{j=1}^{\infty}\frac{1}{j}\sum_{i=s-h+1}^{s}q^{-ij}\\
&=&-\sum_{j=1}^{\infty}\frac{1}{j}\frac{q^{-j(s-h)}-q^{-js}}{q^j-1}\\
\end{eqnarray*}
\end{comment}

%$P[N_i > s]$ converges to $0$ exponentially fast in $(s-h)$.
%\textcolor{red}{ citation/proof not obtained/completed yet. Gabriel
%once gave this conclusion}

%Consider the $n$ generations as a population of $n$ distinct
%elements. The process of collecting coded packets is then analogous
%to sampling the population with replacement until each element $i$
%has been sampled at least $N_i$ times.

Let $T(n,m)$ be the number of coded packets collected when for the
first time there are at least $m(\ge1)$ coded packets from every
generation in the collection at the receiver.
%Let the number of coded packets needed to get at least $m\ge1$ coded
%packets from every generation be $T(n,m)$.
The total number of coded packets needed for accumulating $N_i(\ge
h)$ coded packets of each generation $G_i$ is then greater or equal
to $T(n,h)$.

The {\it collector's brotherhood problem}\cite{brotherhood}, also
referred to as the {\it double dixie cup
problem}\cite{doubledixiecup}, investigates the stochastic
quantities associated to acquiring $m(\ge1)$ complete sets of $n$
distinct elements by random sampling.

%We will introduce the probability model and pertaining results in
%part \ref{subsec:thru_sysmodel}, and apply these results to our
%analysis of network coding with generations. in part
%\ref{subsec:thru_tradeoff}.

\subsection{Results From The Collector's Brotherhood Problem}
\label{subsec:probmodel}

For any $m\in\mathbb{N}$, we define $S_m(x)$ as follows:
\begin{align}\label{eq:sm_m}
S_m(x)=&1+\frac{x}{1!}+\frac{x^2}{2!}+\dots+\frac{x^{m-1}}{(m-1)!}\quad(m\ge
1)\\
\label{eq:sm_0}S_{\infty}(x)=& \exp(x) ~\text{and} ~ S_{0}(x)=0.
\end{align}

\begin{theorem}\label{thm:gen_wait}
Consider uniformly random sampling of $n$ distinct coupons with
replacement. Suppose for some $A\in\mathbb{N}$, integers
$k_1,\dots,k_A$ and $m_1,\dots,m_A$ satisfy $1\le
k_1<\dots<k_A\le n$ and $m_1>\dots>m_A\ge1$. For convenience
of notation, let $m_0=\infty$ and $m_{A+1}=0$. Then, the expected
number of samplings needed to acquire at least $m_1$ copies of at
least $k_1$ coupons, at least $m_2$ copies of at least $k_2$
coupons, and so on, at least $m_A$ copies of at least $k_A$ coupons
in the collection is
%{\small
%\begin{align}
%n\int_{0}^{\infty}\left\{1-\right. \left.\sum_{{{i_0=0,
%i_{A+1}=n,}\atop k_j\le i_j\le i_{j+1},}\atop
%j=1,2,\dots,A}\prod_{j=0}^{A}{{i_{j+1}}\choose{i_j}}\left[\left(S_{m_{j}}(x)-S_{m_{j+1}}(x)\right)e^{-x}\right]^{i_{j+1}-i_{j}}\right\}dx
%\end{align}
%}
%\begin{equation}
\begin{align}
&n\int_{0}^{\infty}\Bigl\{e^{nx}-\\
&\sum_{{{(i_0,i_1,\dots,i_{A+1}):\atop i_0=0,i_{A+1}=n}\atop k_j\le
i_j\le i_{j+1}}\atop j=1,2,\dots,A}\!\!
\prod_{j=0}^{A}{{i_{j+1}}\!\!\choose{i_j}}\Bigl[S_{m_{j}}(x)-S_{m_{j+1}}(x)\Bigr]^{i_{j+1}-i_{j}}\Bigr\}e^{-nx}dx\notag
\end{align}
%\end{equation}
\end{theorem}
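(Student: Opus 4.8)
The plan is to prove the formula by \emph{Poissonizing} the sampling process and then applying the tail-integral representation of the expectation. First I would embed the discrete i.i.d.\ sampling into a rate-one Poisson arrival process on $[0,\infty)$, marking each arrival with the index of the coupon drawn. Because the marks are i.i.d.\ uniform on $\{1,\dots,n\}$ and independent of the arrival epochs, the number $N_i(t)$ of copies of coupon $i$ received by time $t$ is $\mathrm{Poisson}(t/n)$, and the $N_i(t)$ are mutually independent across $i$. Writing $T$ for the discrete number of samples needed to meet the target and $W=\tau_T$ for the continuous epoch at which it is first met, the fact that $T$ is a function of the marks alone makes it independent of the $\mathrm{Exp}(1)$ interarrival gaps $\xi_i$, so $E[W]=E\!\left[\sum_{i=1}^{T}\xi_i\right]=E[T]$ by Wald's identity. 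Hence it suffices to evaluate $E[W]=\int_0^\infty \Pr[W>t]\,dt$.

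Next I would compute $\Pr[W\le t]$, the probability that the target is reached by time $t$, by sorting coupons into $A+1$ \emph{levels}: coupon $i$ is at level $j$ when $m_{j+1}\le N_i(t)<m_j$, for $j=0,1,\dots,A$, where the conventions $m_0=\infty$ and $m_{A+1}=0$ make level $0$ the event $\{N_i(t)\ge m_1\}$ and level $A$ the event $\{N_i(t)<m_A\}$. From the single-coupon Poisson law together with the definitions (\ref{eq:sm_m})--(\ref{eq:sm_0}), the probability that a given coupon sits at level $j$ is exactly $e^{-t/n}\,[S_{m_j}(t/n)-S_{m_{j+1}}(t/n)]$. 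If $n_j$ denotes the number of coupons at level $j$, then the number of coupons having at least $m_\ell$ copies equals $i_\ell:=\sum_{j<\ell}n_j$, so the nested target ``at least $k_\ell$ coupons with at least $m_\ell$ copies'' is precisely $i_\ell\ge k_\ell$ for every $\ell$. Since $0=i_0\le i_1\le\cdots\le i_{A+1}=n$ holds automatically, these cumulative variables range over exactly the index set $\{k_j\le i_j\le i_{j+1}\}$ of the theorem. Summing the multinomial probability over all admissible level-occupancies and using $\sum_j n_j=n$ to factor out $e^{-t}$ gives a closed form for $\Pr[W\le t]$.

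To match the stated expression I would then verify the identity $\binom{n}{n_0,\dots,n_A}=\prod_{j=0}^{A}\binom{i_{j+1}}{i_j}$, which follows by telescoping factorials once $i_0=0$ and $i_{A+1}=n$ are substituted, the interior factorials cancelling to leave $n!/\prod_j n_j!$. This rewrites $\Pr[W\le t]=e^{-t}\sum\prod_{j=0}^{A}\binom{i_{j+1}}{i_j}\bigl[S_{m_j}(t/n)-S_{m_{j+1}}(t/n)\bigr]^{i_{j+1}-i_j}$, the sum running over the admissible indices. Finally, inserting $\Pr[W>t]=1-\Pr[W\le t]$ into $E[W]=\int_0^\infty\Pr[W>t]\,dt$ and changing variables $t=nx$ produces the prefactor $n$, turns $e^{-t}$ into $e^{-nx}$, and lets one write $1=e^{nx}e^{-nx}$, yielding exactly the claimed integral. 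The inner sum is finite, so its interchange with the integral is harmless, and convergence holds because collection completes almost surely with finite mean.

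The step I expect to demand the most care is the bookkeeping that translates the nested acquisition target into the level decomposition: one must check that the monotonicity $m_1>\cdots>m_A$ makes the count-events nested, so that ``having at least $m_\ell$ copies'' is captured cleanly by the cumulative occupancy $i_\ell$, and that the boundary conventions $m_0=\infty$, $m_{A+1}=0$, $i_0=0$, $i_{A+1}=n$ stay consistent throughout. The Poissonization and the Wald step are standard, and the factorial telescoping is routine; the genuine content lies in identifying the correct multinomial classification and confirming that its index set coincides with $\{(i_0,\dots,i_{A+1}):\,i_0=0,\ i_{A+1}=n,\ k_j\le i_j\le i_{j+1}\}$.
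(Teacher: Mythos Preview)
Your proposal is correct. The paper, however, takes a different route: it follows the \emph{symbolic method} of Newman and Shepp. There one writes $E[W]=\sum_{t\ge0}\Pr[\bar E(t)]$, encodes each outcome after $t$ samplings as a monomial in $(x_1+\cdots+x_n)^t$, introduces a linear operator $f$ that deletes exactly those monomials whose exponent pattern meets the target, and then invokes the integral identity $n^{-t}=n\int_0^\infty \tfrac{y^t}{t!}\,e^{-ny}\,dy$ to collapse the sum over $t$ into $\exp(x_1y+\cdots+x_ny)=\prod_i e^{x_iy}$. Evaluating $f$ on that product and setting $x_i=1$ is carried out via the same level decomposition you describe, and yields the formula.

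The two arguments are close under the hood: the integral identity is the Laplace-transform face of Poissonization, and the factorization $\prod_i e^{x_iy}$ is the algebraic counterpart of the independence of the Poisson counts $N_i(t)$. What differs is packaging. Your route is probabilistic---Wald gives $E[W]=E[T]$, Poisson thinning gives independence, and the multinomial over levels is then a direct probability computation---which makes the combinatorics transparent but requires a word on stopping times and finiteness of $E[T]$. The paper's route is formal-algebraic, avoiding any stochastic-process machinery at the price of the somewhat opaque operator $f$. The step you single out as delicate---matching the nested thresholds $m_1>\cdots>m_A$ to the cumulative occupancies $i_\ell$ with the boundary conventions $i_0=0$, $i_{A+1}=n$---is handled identically in both proofs.
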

\begin{proof}
Our proof generalizes the symbolic method of \cite{doubledixiecup}.
Please refer to Appendix.
\end{proof}
%\begin{remark}
%The Newman and Shepp's symbolic method \cite{doubledixiecup} used in
%the proof of Theorem \ref{thm:gen_wait} in the appendix can also be
%used to derive the expected number of samplings needed to obtain a
%particular sub-collection of the set of coupons. In the content
%distribution context, it is useful for the case where a receiver
%looks for some specific content. Moreover, the method of
%\cite{doubledixiecup} is applicable  even if the coupons are not
%sampled uniformly at random.
%\end{remark}

Setting $A=1$, $k_1=k$ and $m_1=m$ in Theorem \ref{thm:gen_wait}
gives the following corollary:
\begin{corollary}\label{thm_waitk}
The expected number of samplings needed to collect at least $k$ of
the $n$ distinct coupons for at least $m$ times is
$n\int_{0}^{\infty}\left\{\sum_{i=0}^{k-1}{n\choose
i}S_{m}^{n-i}(x)\left[e^x-S_{m}(x)\right]^i\right\}e^{-nx}dx$.
\end{corollary}

In the context of coding over generations, this corollary gives us
an estimation of the growth of the size of decodable information. It
is also helpful to the study of a coding scheme with a ``precode''
as discussed in \cite{petarchunked}.
%\begin{equation}
%n\int_{0}^{\infty}\left\{e^{nx}-\sum_{i_0=0,i_{A+1}=n,\atop k_j\le
%i_j\le n, i_{j-1}\le i_j, j=1,2,\dots,A,
%}\left[e^{x}-S_{m_j}(x)\right]^{i_j-i_{j-1}}S_{m_j}^{i_{j+1}-i_j}(x)\right\}e^{-nx}dx
%\end{equation}

%Use $\{(x_1+x_2+\dots+x_n)^i\}$ to denote the n-variable polynomial
%$(x_1+x_2+\dots+x_n)^i$ with monomials having all exponents $\ge h$
%removed.

Furthermore, by setting $k=n$ in Corollary \ref{thm_waitk}, we
recover the following result of \cite{doubledixiecup}:
\begin{corollary}
The expected sampling size to acquire $m$ complete sets of $n$ coupons is
\begin{equation}\label{eq:etnm}
E[T(n,m)] = n\int_{0}^{\infty}\left[1-(1-S_m(x)e^{-x})^n\right]dx
\end{equation}
\end{corollary}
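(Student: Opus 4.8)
The plan is to specialize Corollary~\ref{thm_waitk} at $k=n$ and collapse the resulting finite sum via the binomial theorem. First I would observe the combinatorial identification: requiring that at least $n$ of the $n$ distinct coupons be collected at least $m$ times each is precisely the event that \emph{every} coupon has been collected $m$ times, i.e.\ that $m$ complete sets have been acquired. Hence the quantity produced by Corollary~\ref{thm_waitk} with $k=n$ is by definition $E[T(n,m)]$, and it remains only to simplify the integrand.

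Next I would carry out the algebraic reduction. Setting $k=n$ leaves the truncated sum $\sum_{i=0}^{n-1}\binom{n}{i}S_m^{n-i}(x)[e^x-S_m(x)]^i$. The key step is to complete this to a full binomial sum: adjoining the missing $i=n$ term $\binom{n}{n}S_m^0(x)[e^x-S_m(x)]^n=[e^x-S_m(x)]^n$ and applying the binomial theorem gives $\sum_{i=0}^{n}\binom{n}{i}S_m^{n-i}(x)[e^x-S_m(x)]^i=(S_m(x)+(e^x-S_m(x)))^n=e^{nx}$. Subtracting the adjoined term back off shows that the truncated sum equals $e^{nx}-[e^x-S_m(x)]^n$.

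Finally I would multiply through by the $e^{-nx}$ weight carried in Corollary~\ref{thm_waitk}, obtaining the integrand $1-[e^x-S_m(x)]^n e^{-nx}=1-(1-S_m(x)e^{-x})^n$, and restore the leading factor $n$ and the integral over $(0,\infty)$ to recover \eqref{eq:etnm}. There is no genuine obstacle here: the argument is a one-line collapse of a truncated binomial expansion, and the only point requiring care is tracking the exponential weight so that $[e^x-S_m(x)]^n e^{-nx}=(1-S_m(x)e^{-x})^n$ factors cleanly.
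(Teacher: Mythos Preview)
Your proposal is correct and follows exactly the approach the paper indicates: the paper simply states that the corollary is obtained ``by setting $k=n$ in Corollary~\ref{thm_waitk}'' without spelling out the algebra, and your completion of the truncated binomial sum to recover $e^{nx}$ and subsequent factoring of $e^{-nx}$ is precisely the omitted routine simplification.
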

(\ref{eq:etnm}) can be numerically evaluated for finite $m$ and $n$.

The asymptotic of $E[T(n,m)]$ for large $n$ has been discussed in
literature such as \cite{doubledixiecup}, \cite{flatto} and
\cite{couponnewaspects}.

\begin{theorem}(\cite{flatto})
When $n\rightarrow\infty$,
\label{eq:tnm_largen}
\[
E[T(n,m)]= n\log n+(m-1)n\log\log n+C_m n +o(n),
\]
 where $C_m=\gamma-\log(m-1)!$, $\gamma$
is Euler's constant and $m\in\mathbb{N}$.
\end{theorem}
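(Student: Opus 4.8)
The plan is to read off the asymptotics directly from the integral representation~(\ref{eq:etnm}) obtained by setting $k=n$, namely $E[T(n,m)] = n\int_{0}^{\infty} f_n(x)\,dx$ with $f_n(x) = 1-(1-S_m(x)e^{-x})^n$, by localizing the integral around the sharp threshold where the integrand drops from $1$ to $0$. Since $S_m(x)e^{-x}=\Pr[\mathrm{Poi}(x)\le m-1]$ decreases strictly from $1$ to $0$ (its derivative is $-\tfrac{x^{m-1}}{(m-1)!}e^{-x}$), the factor $(1-S_m(x)e^{-x})^n$ increases from $0$ to $1$, so $f_n$ is a decreasing function of $x$ that makes its transition precisely where $nS_m(x)e^{-x}=\Theta(1)$.

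First I would locate the threshold $x_n$. Using $S_m(x)\sim \tfrac{x^{m-1}}{(m-1)!}$ as $x\to\infty$, the equation $n\tfrac{x^{m-1}}{(m-1)!}e^{-x}=1$ becomes $x=\log n+(m-1)\log x-\log(m-1)!$; bootstrapping $x\sim\log n$ and then $\log x\sim\log\log n$ yields
\[
x_n = \log n + (m-1)\log\log n - \log(m-1)! + o(1).
\]
This already supplies the first two terms of the claimed expansion and the $-\log(m-1)!$ piece of the constant. Next I would rescale by writing $x=x_n+\xi$. A short computation gives $ne^{-x_n}=(m-1)!/(\log n)^{m-1}$, and combined with $S_m(x)\sim \tfrac{x^{m-1}}{(m-1)!}$ and $x^{m-1}=(\log n)^{m-1}(1+o(1))$ one finds $nS_m(x)e^{-x}\to e^{-\xi}$, hence $f_n(x_n+\xi)\to \phi(\xi):=1-\exp(-e^{-\xi})$ pointwise.

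Splitting the integral at the threshold,
\begin{align*}
\int_{0}^{\infty}\!\bigl[1-(1-S_m(x)e^{-x})^n\bigr]\,dx &= x_n \\
&\ + \int_{-x_n}^{0}\!\bigl[f_n(x_n+\xi)-1\bigr]\,d\xi + \int_{0}^{\infty}\! f_n(x_n+\xi)\,d\xi .
\end{align*}
Passing to the limit (so the lower limit $-x_n\to-\infty$), the two correction integrals converge to $\int_{-\infty}^{0}[\phi(\xi)-1]\,d\xi+\int_{0}^{\infty}\phi(\xi)\,d\xi$, which, after the substitution $u=e^{-\xi}$, equals
\[
\int_{0}^{1}\frac{1-e^{-u}}{u}\,du-\int_{1}^{\infty}\frac{e^{-u}}{u}\,du=\gamma,
\]
a standard representation of Euler's constant. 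Adding $x_n$ and multiplying by $n$ reproduces $E[T(n,m)]=n\log n+(m-1)n\log\log n+(\gamma-\log(m-1)!)\,n+o(n)$.

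The main obstacle is making the limit interchange rigorous rather than the bookkeeping above. Two approximations must be controlled uniformly across the transition window, whose width grows like $\log\log n$: the replacement of $S_m(x)$ by its leading term $\tfrac{x^{m-1}}{(m-1)!}$, and of $(1-y)^n$ by $e^{-ny}$, together with the replacement of $x^{m-1}$ by $(\log n)^{m-1}$. I would supply a dominating function via the elementary inequalities $(1-y)^n\le e^{-ny}$ for $\xi<0$ and $1-(1-y)^n\le ny$ for $\xi>0$, after first establishing that $ny=nS_m(x)e^{-x}$ is comparable to $e^{-\xi}$ uniformly on the window; this lets dominated convergence apply. The delicate point is to verify that the $o(1)$ relative errors in $x^{m-1}$ and in $S_m$ feed only into the $o(n)$ remainder and do not perturb the constant $C_m$, and that the far tails (where $x\to 0$ and $y\to 1$, and where $x\gg x_n$ and $f_n\to 0$ rapidly) are negligible.
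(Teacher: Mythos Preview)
The paper does not supply its own proof of this theorem; it is quoted from Flatto~\cite{flatto}, so there is no in-paper argument to compare your proposal against. Your localization of the integral~(\ref{eq:etnm}) around the threshold $x_n=\log n+(m-1)\log\log n-\log(m-1)!+o(1)$ is a correct and standard route, and the evaluation of the residual integral via the substitution $u=e^{-\xi}$ to obtain $\int_0^1\tfrac{1-e^{-u}}{u}\,du-\int_1^\infty\tfrac{e^{-u}}{u}\,du=\gamma$ is accurate. It is worth noting that the pointwise limit you establish, $f_n(x_n+\xi)\to 1-\exp(-e^{-\xi})$, is precisely the Erd\"{o}s--R\'{e}nyi/Flatto limit law recorded here as Lemma~\ref{thm:tmn_dist}; your argument in effect rederives that law and then integrates it, so the two results are tightly coupled. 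The technical issues you single out---uniform control of $S_m(x)\sim x^{m-1}/(m-1)!$ and of $(1-y)^n\approx e^{-ny}$ across the transition window, and negligibility of the far tails---are indeed where the work lies, and the domination scheme you sketch is adequate to close them.
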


For $m\gg1$, on the other hand, we have \cite{doubledixiecup}
\begin{equation}\label{eq:tnm_largem}
E[T(n,m)]\rightarrow nm.
\end{equation}

%\begin{equation}\label{eq:tnm_largen}
%E[T(n,m)]= n\Bigl\{\mathcal{O}\bigl(\log n\bigr) +
% (m-1)\log\bigl[\log\bigl(n(1+o(1))\bigr)\bigr]\Bigr\}.
%\end{equation}

What is worth mentioning is that, as the number of coupons
$n\rightarrow\infty$, for the first complete set of coupons, the
number of samplings needed is $\mathcal{O}(n\log n)$, while the
additional number of samplings needed for each additional set is
only $\mathcal{O}(n\log\log n)$.

In addition to the expected value of $T(n,m)$, the concentration of
$T(n,m)$ around its mean is also of great interest to us. We can
derive from it an estimate of the probability of successful decoding
after gathering a certain number of coded packets. The generating
function of $T(n,m)$ and its probability distribution are given in
\cite{brotherhood}, but it is quite difficult to evaluate them
numerically. We will instead look at the asymptotic case where the
number of coupons $n\rightarrow \infty$. Erd\"{o}s and R\'{e}nyi
have proven in \cite{renyi} the limit law of $T(n,m)$ as
$n\rightarrow \infty$. Here we restate Lemma B from \cite{flatto} by
Flatto, which in addition expresses the rate of convergence to the
limit law.
%\begin{eqnarray}
%\varphi_{n,m}(x)&=&\sum_{i\ge0}P\{T(n,m)=i\}x^i\\
%&=&n(1-x)\int_{0}^{\infty}[e^{-y(1-x)}-S_{m}(xy)e^{-y}]^ndy
%\end{eqnarray}
%\begin{eqnarray}
%\phi_{n,m}(x)&=&\sum_{i\ge0}P\{T(n,m)>i\}x^i\\
%&=&n\int_{0}^{\infty}(e^{-ny(1-x)}-(e^{-y(1-x)}-S_{m}(xy)e^{-y})^n)dy
%\end{eqnarray}
%Hence,
%\begin{eqnarray}
%\phi_{n,1}(x)&=&\sum_{i\ge0}P\{T(n,1)>i\}x^i\\
%&=&n\int_{0}^{\infty}(e^{-ny(1-x)}-(e^{-y(1-x)}-e^{-y})^n)dy
%\end{eqnarray}
%\[P[T(n,m)>i]=\frac{1}{i!}\frac{d^i\phi_{n,m}(x)}{dx^i}|_{x=0}\]

%At $T(n,m)$, the probability distribution of the number of
%generations which have collected exactly $m$ pieces is
\begin{lemma}\cite{flatto}\label{thm:tmn_dist}
Let \[Y(n,m)=\frac{1}{n}\left(T(n,m)-n\log n-(m-1)n\log\log
n\right).\] Then,
\[
%\label{eq:tmn_dist}
\textnormal{Pr}[Y(n,m)\le y] =
\exp\left(-\frac{e^{-y}}{(m-1)!}\right)+\mathcal{O}\left(\frac{\log\log
n}{\log n}\right).
\]
\end{lemma}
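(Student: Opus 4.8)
The plan is to prove the limit law by \emph{Poissonization}, converting the dependent multinomial coupon counts into independent waiting times, and then applying classical extreme-value asymptotics. I would embed the sampling in a rate-$1$ Poisson process in which each arrival independently selects one of the $n$ coupons uniformly; then the streams of the $n$ coupons are independent Poisson processes, each of rate $1/n$. Let $\tau_i$ denote the continuous time at which the $m$th copy of coupon $i$ is collected, so the $\tau_i$ are i.i.d., each distributed as the sum of $m$ independent rate-$1/n$ exponentials. The Poissonized completion time is $T^*(n,m)=\max_{1\le i\le n}\tau_i$, and since the number of copies of coupon $i$ in $[0,t]$ is Poisson$(t/n)$, one has $\textnormal{Pr}[\tau_i\le t]=1-S_m(t/n)e^{-t/n}$, whence
\[
\textnormal{Pr}[T^*(n,m)\le t]=\bigl(1-S_m(t/n)e^{-t/n}\bigr)^n .
\]
This is consistent with the paper's own formula (\ref{eq:etnm}), since by Wald's identity $E[T(n,m)]=E[T^*(n,m)]$.

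Second, I would carry out the extreme-value scaling on this product. Writing $x=t/n$ and choosing the centering $t=n\log n+(m-1)n\log\log n+ny$, so that $x=\log n+(m-1)\log\log n+y$, I use the large-$x$ expansion $S_m(x)=\frac{x^{m-1}}{(m-1)!}\bigl(1+\mathcal{O}(1/x)\bigr)$ together with $x^{m-1}=(\log n)^{m-1}\bigl(1+\mathcal{O}(\log\log n/\log n)\bigr)$ and $e^{-x}=\frac{e^{-y}}{n\,(\log n)^{m-1}}$. These combine to give
\[
S_m(x)e^{-x}=\frac{e^{-y}}{(m-1)!\,n}\Bigl(1+\mathcal{O}\!\left(\tfrac{\log\log n}{\log n}\right)\Bigr),
\]
and substituting into the product and taking logarithms yields $\textnormal{Pr}[T^*(n,m)\le t]=\exp\!\bigl(-e^{-y}/(m-1)!\bigr)+\mathcal{O}(\log\log n/\log n)$. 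This already produces both the Gumbel-type limit and the claimed rate at the level of the Poissonized process.

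Third is \emph{de-Poissonization}: transferring the statement from $T^*(n,m)$ back to the discrete $T(n,m)$. If $N(t)$ is the rate-$1$ arrival counter, then $T(n,m)=N\!\left(T^*(n,m)\right)$, and on the relevant scale $T^*(n,m)\approx n\log n$ the centered count $N(t)-t$ has standard deviation $\mathcal{O}(\sqrt{n\log n})$. Dividing by $n$, the discrepancy between $Y(n,m)$ and its Poissonized analogue is $\mathcal{O}(\sqrt{\log n/n})$ in probability, which is dominated by the main error $\mathcal{O}(\log\log n/\log n)$; hence the limit law and its rate carry over to $T(n,m)$ unchanged.

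The main obstacle is making the de-Poissonization rigorous and \emph{uniform in $y$}, i.e.\ ensuring the coupling of $T(n,m)$ with $T^*(n,m)$ controls the error simultaneously over the whole real line rather than for each fixed $y$, and verifying that the fluctuation bound on $N(\cdot)$ can be inserted into the CDF without inflating the error past $\mathcal{O}(\log\log n/\log n)$. The extreme-value computation in the second step is essentially routine once the centering is fixed; the delicate accounting is in tracking the lower-order terms of $S_m$ and of $x^{m-1}$ and in confirming that the Poisson-to-multinomial transfer genuinely contributes a subdominant error.
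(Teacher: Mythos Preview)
The paper does not prove this lemma at all: it is quoted as ``Lemma B from \cite{flatto}'' and used as a black box, so there is no in-paper argument to compare against.

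That said, your Poissonization outline is a sound and standard route to the result. The product formula $\bigl(1-S_m(t/n)e^{-t/n}\bigr)^n$ for the Poissonized maximum is correct, the extreme-value centering computation is accurate and does yield the $\mathcal{O}(\log\log n/\log n)$ error, and the de-Poissonization heuristic $T(n,m)=N\bigl(T^*(n,m)\bigr)$ with $N(t)-t=\mathcal{O}_P\bigl(\sqrt{n\log n}\bigr)$ is the right idea. You have also correctly identified the one genuine technical gap: turning the de-Poissonization into a statement uniform in $y$ requires a maximal inequality for $|N(t)-t|$ over the relevant time window together with a Lipschitz bound on the limiting Gumbel-type CDF, so that a random shift of size $\mathcal{O}_P\bigl(\sqrt{\log n/n}\bigr)$ in $y$ perturbs the CDF by at most that amount. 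Once that is written out, the $\sqrt{\log n/n}$ contribution is indeed dominated by $\log\log n/\log n$ and the stated rate survives. Flatto's own proof proceeds by direct asymptotic analysis of the discrete distribution rather than via Poissonization; your approach is arguably cleaner conceptually but requires the extra de-Poissonization step, whereas the direct method avoids that at the cost of heavier series manipulations.
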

\begin{remark}(Remark 2, \cite{flatto})\label{rmk:dist}
The estimation in Lemma \ref{thm:tmn_dist} is understood to hold
uniformly on any finite interval $-a\le y\le a$. i.e., for any
$a>0$,
\[\left|\textnormal{Pr}[Y(n,m)\le y]-\exp\left(-\frac{\exp(-y)}{(m-1)!}\right)\right|\le C(m,a)\frac{\log\log n}{\log n},\]
$n\ge 2$ and $-a\le y\le a$. $C(m,a)$ is a positive constant
depending on $m$ and $a$, but independent of $n$.
\end{remark}
%\begin{eqnarray} \textnormal{Pr}[Y(n,m)<y] &=&
%\exp\left[-\frac{1}{(m-1)!}n(\log
%n)^{m-1}\exp\left(-\frac{t}{n}\right)\right]%\\
%\end{eqnarray}

%\begin{proof}
%As a corollary of Theorem 2.1 in \cite{ManchesterCoupon}, if random
%$\textnormal{Pr}[Z<z]=\exp(-\frac{\exp(-z)}{(m-1)!})$, then
%$\frac{1}{n}[T(n,m)-n\log n-(m-1)n\log\log n]$ converges to $Z$ in
%distribution as $n\rightarrow\infty$.
%\end{proof}

%Another estimation of the number of is $E[T(n,E[N_i])]$

\subsection{Throughput}  \label{subsec:thru_tradeoff}
The number of coded packets $T$ the receiver needs to collect for
successful decoding is lower bounded by $T(n,h)$, and so $E[T]$ is
lower bounded by $E[T(n,h)]$. Also, by Lemma \ref{thm:wait_ccdf},
$N_i$ is well concentrated near $h$ for large $q$, and so $T(n,h)$
could be a good estimate for $T$ for finite $n$. In addition, from
Thm.~\ref{eq:tnm_largen} and (\ref{eq:tnm_largem}) we observe that,
when $n\gg1$ or $m\gg1$, $E[T(n,m)]$ is linear in $m$. Thus, we
could substitute $E[N_i]$ for $m$ in these expressions to roughly
estimate the asymptotic expected number of coded packets needed for
successful decoding.

From Lemma \ref{thm:tmn_dist}, we obtain the following lower
bound to the probability of decoding failure as
$n\rightarrow\infty$:
\begin{theorem}\label{thm:pe_lb}
When $n\rightarrow\infty$, the probability of decoding failure when
$t$ coded symbols have been collected is greater than
$1-\exp\left[-\frac{1}{(h-1)!}n(\log
n)^{h-1}\exp\left(-\frac{t}{n}\right)\right]+\mathcal{O}\left(\frac{\log\log
n}{\log n}\right)$.
\end{theorem}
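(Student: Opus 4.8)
The plan is to turn the distributional result of Lemma \ref{thm:tmn_dist} into a tail estimate and then transfer it to the true collection time $T$ through the pointwise domination $T\ge T(n,h)$. First I would justify that inequality: successful decoding requires $h$ linearly independent coding vectors, and hence at least $h$ received packets, from \emph{every} generation, so the instant $T$ at which all $n$ generations become simultaneously decodable cannot precede the instant $T(n,h)$ at which each generation has accumulated $h$ packets. Since $T\ge T(n,h)$ holds sample-path-wise, the failure event satisfies $\{T>t\}\supseteq\{T(n,h)>t\}$, whence $\textnormal{Pr}[T>t]\ge \textnormal{Pr}[T(n,h)>t]=1-\textnormal{Pr}[T(n,h)\le t]$. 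It therefore suffices to evaluate $\textnormal{Pr}[T(n,h)\le t]$ asymptotically.

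Second I would rewrite the event $\{T(n,h)\le t\}$ in the normalized variable of Lemma \ref{thm:tmn_dist}, taking $m=h$. Solving the definition of $Y(n,h)$ for the threshold shows that $\{T(n,h)\le t\}$ is exactly $\{Y(n,h)\le y\}$ with $y=\frac{t}{n}-\log n-(h-1)\log\log n$. Applying the lemma and then computing
\[
e^{-y}=\exp\!\Bigl(-\tfrac{t}{n}+\log n+(h-1)\log\log n\Bigr)=n\,(\log n)^{h-1}\,e^{-t/n}
\]
gives
\[
\textnormal{Pr}[T(n,h)\le t]=\exp\!\Bigl[-\tfrac{1}{(h-1)!}\,n(\log n)^{h-1}e^{-t/n}\Bigr]+\mathcal{O}\!\Bigl(\tfrac{\log\log n}{\log n}\Bigr).
\]
Substituting into $\textnormal{Pr}[T>t]\ge 1-\textnormal{Pr}[T(n,h)\le t]$ and noting that the sign of the $\mathcal{O}$ term is immaterial (so $-\mathcal{O}(\cdot)=\mathcal{O}(\cdot)$) yields precisely the claimed lower bound.

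The delicate point — more a caveat than a genuine obstacle — is the legitimacy of the error estimate. Lemma \ref{thm:tmn_dist}, as sharpened in Remark \ref{rmk:dist}, guarantees the $\mathcal{O}(\log\log n/\log n)$ bound only \emph{uniformly} for $y$ in a fixed finite interval $[-a,a]$, whereas here $y$ depends on both $t$ and $n$. I would therefore restrict attention to the regime $t=n\log n+(h-1)n\log\log n+\Theta(n)$, for which $y$ remains bounded as $n\to\infty$; in that range the constant $C(h,a)$ of Remark \ref{rmk:dist} controls the error and the asymptotic statement is meaningful. I would state this admissible range for $t$ explicitly, since outside it the bound either is vacuous or falls outside the lemma's guaranteed uniformity.
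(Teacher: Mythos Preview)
Your approach is exactly the paper's: use $T\ge T(n,h)$ to get $\textnormal{Pr}[T>t]\ge \textnormal{Pr}[T(n,h)>t]$, rewrite the latter via $Y(n,h)$, and invoke Lemma~\ref{thm:tmn_dist}. Your explicit computation of $e^{-y}$ and your caveat on the admissible range of $t$ (needed for the $\mathcal{O}$ term to be meaningful, per Remark~\ref{rmk:dist}) are refinements the paper's proof leaves implicit.
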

\begin{proof}
%Let the number of coded packets necessary for successful decoding be $T$.
The probability of decoding failure after acquiring $t$ coded
packets equals $\textnormal{Pr}[T>t]$. Since $T\ge T(n,h)$,
\begin{align*}
\textnormal{Pr}[T>t] \ge &\,\textnormal{Pr}[T(n,h)>t]\\
=&1-\!\textnormal{Pr}\left[Y(n,h)\le\frac{t}{n}-\log n-(m-1)\log\log
n\right]
\end{align*}
The result in Theorem \ref{thm:pe_lb} follows directly from Lemma
\ref{thm:tmn_dist}.
\end{proof}

\begin{comment}
\begin{corollary}\label{thm:overhead}
When $n\rightarrow\infty$, for the probability of decoding failure
to be smaller than $\epsilon$, the number of coded symbols collected
has to be at least $E[T(n,h)]-n\log\log\frac{1}{1-\epsilon}$. If
$\epsilon=\frac{1}{N^c}$ for some constant $c$, then the number of
coded symbols necessary for successful decoding has to be at least
$E[T(n,h)]+cn\log(nh)$.
\end{corollary}
\end{comment}

\begin{figure}[htbp]
\begin{center}
\subfigure[]{\label{subfig:T}\includegraphics[scale=0.5]{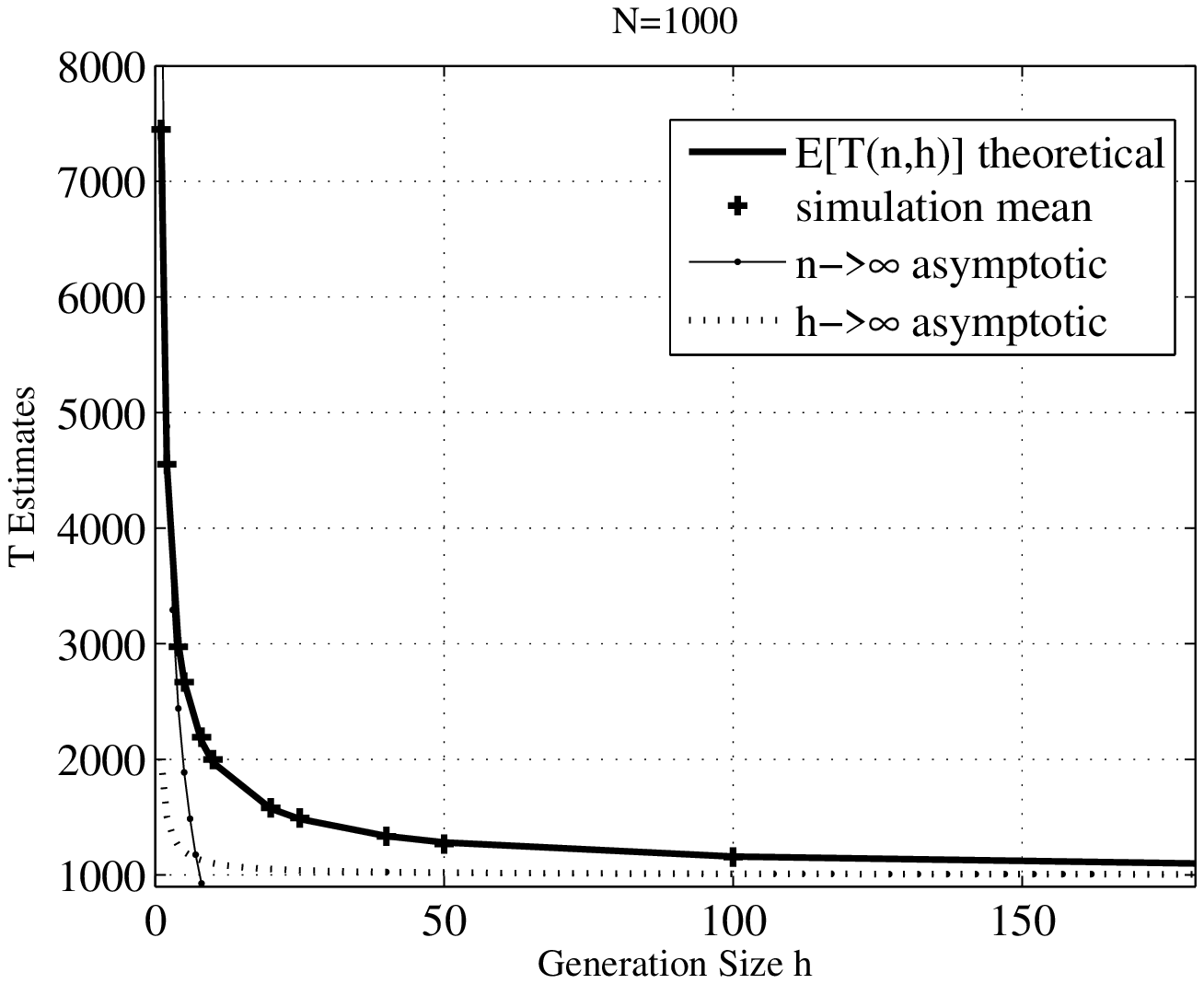}}\qquad
\subfigure[]{\label{subfig:pe}\includegraphics[scale=0.5]{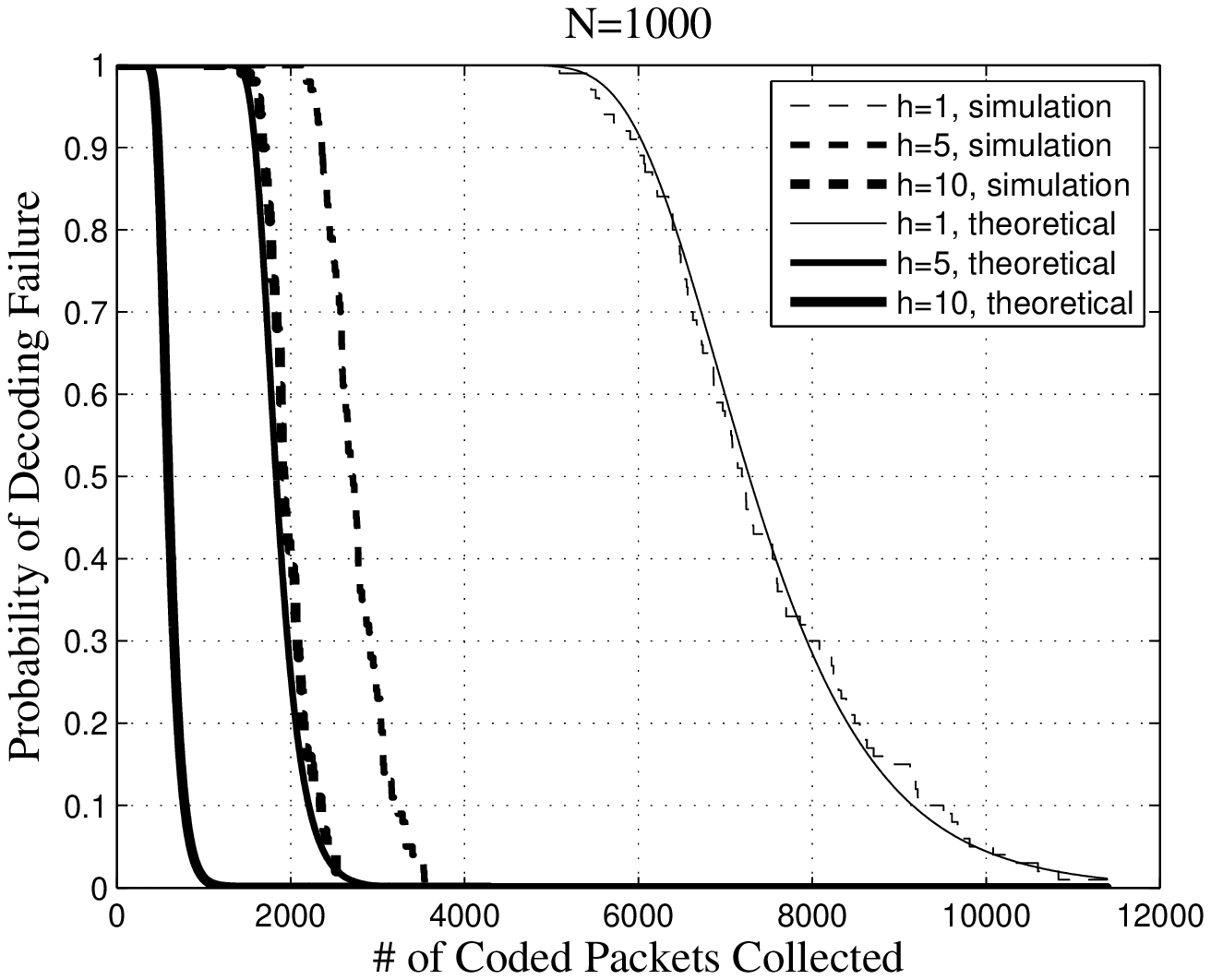}}
\caption{(a)Estimates of $T$, the number of coded packets needed for
successful decoding when the total number of information packets
$N=1000$: $E[T(n,h)]$; average of $T$ in simulation ($q=256$);
$n\rightarrow\infty$ asymptotic Theorem~\ref{eq:tnm_largen}; $m\gg1$
asymptotics (\ref{eq:tnm_largem}).
%; $T$ to ensure the prob. of
%decoding failure to be within $1/N^c$
%(Corollary~\ref{thm:overhead}).
(b)Estimates of probability of decoding failure versus the number of
coded packets collected: Theorem \ref{thm:pe_lb} along with
simulation results ($q=256$).} \label{fig:brotherhood}
\end{center}
\end{figure}

Figure \ref{fig:brotherhood}\subref{subfig:T} shows several
estimates of $T$ for fixed $N=nh=1000$ versus generation size
$h$. %{\footnote{If $N$ is not a multiple of $h$, number of generations
%$n=\lceil\frac{N}{h}\rceil$. The roughness in the curves is due to
%this.} }.
It is worth noting that, for fixed $N$, $E[T(n,h)]$ drops
significantly as $h$ is increased to a relatively small value. For
the example $N=1000$, when each generation includes $1/10$ of all
the information packets, the expected overhead required for
successful decoding is below $16\%$. In such scenarios as assumed in
our model, where feedback is minimal, the benefit of coding on
throughput is pronounced even when it is done in relatively small
information block length. In effect, communication overhead due to
the exchange of control messages makes way for moderate
computational complexity at individual nodes. This is particularly
meaningful to communication networks in shared media, in which
contention occurs frequently, and also to networks of nodes with
medium computing power.

Figure \ref{fig:brotherhood}\subref{subfig:pe} shows the estimate of
the probability of decoding failure versus $T$. As pointed out in
Remark \ref{rmk:dist}, the deviation of the CDF of $T(n,m)$ from the
limit law for $n\rightarrow\infty$ depends on $m$ and is on the
order of $\mathcal{O}({\frac{\log\log n}{\log n}})$, which is quite
slow. This is also implied in our observation from Figure
\ref{fig:brotherhood}\subref{subfig:T} that Thm.~\ref{eq:tnm_largen}
gives a good estimate of $E[T(n,m)]$ only for very small values of
$m$(compared to $n$).
%\begin{equation}
%\mathcal{O}\left((c+1)n^2+cn\log h+n(h-1)\log\log
%n-n\log\left[(h-1)!\right]\right).
%\end{equation}

%\begin{corollary}\label{thm:overhead}
%When $n\rightarrow\infty$, for the probability of decoding failure
%to converge to $0$ on the order of
%$\mathcal{O}(1/N^c)=\mathcal{O}(1/(nh)^c)$, where $c$ is a constant,
%the number $T$ of coded symbols collected has to be at least
%\begin{equation}
%\mathcal{O}\left((c+1)n^2+cn\log h+n(h-1)\log\log
%n-n\log\left[(h-1)!\right]\right).
%\end{equation}
%\end{corollary}

\begin{comment}
 Figure \ref{fig:polyNbrother} shows the number of coded
packets required for successful decoding versus the number of
generations $n$ for various values of generation size $h$.
\begin{figure}[htbp]
%\includegraphics[scale=0.45]{SSh.pdf}
\includegraphics[scale=0.45]{polyNbrother_multi_h.pdf}
\caption{E[T(n,m)], a lower bound for the expected # of coded
packets needed for successful decoding} \label{fig:polyNbrother}
\end{figure}
\end{comment}

%%\section{Content Tracking Cost vs.\ Computational Complexity Tradeoff}
%%\input{generations_tracking.tex}

%\section{Overlapping Generations--Throughput vs. Computational Complexity Tradeoff}
%\label{sec:overlap}
%\input{generations_overlap.tex}

\section{Conclusion And Future Work}\label{sec:conclusion}
We investigated the throughput performance of coding over
generations in the unicast scenario under the classical yet
ever-useful coupon collector's model. We derived a general formula
(Theorem \ref{thm:gen_wait}) to compute the expected number of
samplings necessary for collecting multiple copies of $n$ distinct
coupons. The formula can be applied in various ways to analyze a
number of different aspects of coding over generations. Here in
particular, we used a special case of this result, namely, the
expected number of samplings $E[T(n,h)]$ needed to collect $h$
complete sets of $n$ distinct coupons to estimate the expected
number of coded packets necessary for successful decoding when $N$
information packets are encoded over $n$ generations of size $h$
each. Apart from results for finite information length $N$, the
asymptotics of $E[T(n,h)]$ can also be used to estimate the
performance of coding when either the number of generations $n$ or
the generation size $h$ go to inifinity. We also gave a lower bound
for the probability of decoding failure by using the limit law for
$T(n,h)$ as $n\rightarrow\infty$.

The general result expressed in Theorem \ref{thm:gen_wait} has
proved its usefulness to the analysis of coding over overlapping
generations in our recent work~\cite{overlapping}. Further, we have
been able to derive in~\cite{coupon_j} the exact expression, as well
as an upper bound, for the expected number of coded packets
necessary for successful decoding. We expect to extend our analysis
under the coupon collection framework to many other aspects of
coding over generations.

\bibliographystyle{IEEEtran}
\bibliography{netcodrefs}
\appendix
\subsection*{Proof of Lemma~\ref{thm:wait_ccdf}}
\noindent For $i=1,2,\dots,n$ and any $s\ge h$, we have
\begin{align*}
&\ln\textnormal{Pr}\bigl\{N_i\le s\bigr\}
 %=\ln\prod_{k=0}^{h-1}(1-q^{k-s})
 =\sum_{k=0}^{h-1}\ln(1-q^{k-s})=-\sum_{k=0}^{h-1}\sum_{j=1}^{\infty}\frac{1}{j}q^{(k-s)j}\\
 =&-\sum_{j=1}^{\infty}\frac{1}{j}\sum_{k=0}^{h-1}q^{j(k-s)}
 =-\sum_{j=1}^{\infty}\frac{1}{j}q^{-js}\frac{q^{jh}-1}{q^{j}-1}\\
 =&-q^{-(s-h)}\sum_{j=1}^{\infty}\frac{1}{j}q^{-(j-1)(s-h)}\frac{1-q^{-jh}}{q^{j}-1}\\
 >&q^{-(s-h)}\sum_{j=1}^{\infty}\frac{1}{j}\frac{1-q^{-jh}}{1-q^{j}} =q^{-(s-h)}\ln \textnormal{Pr}\bigl\{N_i\le h\bigr\}\\
 >&q^{-(s-h)}\lim_{h\rightarrow\infty,q=2}\ln \textnormal{Pr}\bigl\{N_i\le h\bigr\}
\end{align*}
The claim is obtained by setting \[
\alpha_{q,h}=-\ln \textnormal{Pr}\bigl\{N_i\le h\bigr\},\; \alpha_{2,\infty}=-\!\!\lim_{h\rightarrow\infty,q=2}\ln
\textnormal{Pr}\bigl\{N_i\le h\bigr\}.
\]

\subsection*{Proof of Theorem~\ref{thm:gen_wait}}
\noindent Our proof generalizes the symbolic method of
\cite{doubledixiecup}. Let $E$ be the event that, in the acquired
collection, there are at least $m_1$ copies of at least $k_1$
coupons, at least $m_2$ copies of at least $k_2$ coupons, and so on,
at least $m_A$ copies of at least $k_A$ coupons. For $t\ge 0$, let
$E(t)$ be the event that $E$ has occurred after $t$ samplings, and
let $\mathds{1}_{\bar{E}(t)}$ be the indicator that takes value $1$
if $E(t)$ does not occur and $0$ otherwise. Then random variable
$W=\mathds{1}_{\bar{E}(0)}+\mathds{1}_{\bar{E}(1)}+\dots$ equals the
waiting time for $E$ to occur. For $t\ge 0$, let
$\pi_t=\textnormal{Prob}[\bar{E}(t)]$; we then have
\begin{equation}\label{eq:waitsum}
E[W]=\sum_{t\ge 0}\pi_t.
\end{equation}

To derive $\pi_t$, we introduce an operator $f$ acting on an
$n$-variable polynomial $g$. For a monomial $x_1^{w_1}\dots
x_n^{w_n}$, let $i_j$ be the number of exponents $w_u$ among
$w_1,\dots,w_n$ satisfying $w_u\ge m_j$, for $j=1,\dots,A$. $f$
removes all monomials $x_1^{w_1}\dots x_n^{w_n}$ in $g$ satisfying
$i_1\ge k_1, \dots,i_A\ge k_A$ and $i_1\le \dots\le i_A$. Note that
$f$ is a linear operator, i.e., if $g_1$ and $g_2$ are two
polynomials in the same $n$-variables, and $a$ and $b$ two scalars,
we have $af(g_1)+bf(g_2)=f(ag_1+bg_2)$.

Each monomial in $(x_1+\dots+x_n)^t$ corresponds to one of the $n^t$ possible outcomes
after $t$ samplings, with the exponent of $x_i$ being the number of collected
copies of the $i$th coupon. Hence, the number of outcomes counted in
$\bar{E}(t)$ equals $f((x_1+\dots+x_n)^t)$ evaluated at
$x_1=\dots=x_n=1$.
\[\pi_t=\frac{f((x_1+\dots+x_n)^t)}{n^t}|_{x_1=\dots=x_n=1},\]
and thus, by (\ref{eq:waitsum}), $
E[W]=\sum_{t\ge0}\frac{f\left((x_1+\dots+x_n)^t\right)}{n^t}|_{x_1=\dots=x_n=1}.
$ Making use of the identity
$\frac{1}{n^t}=n\int_{0}^{\infty}\frac{1}{t!}y^te^{-ny}dy$, and
because of the linearity of operator $f$, we further have
\begin{align}
%\lefteqn{\sum_{i\ge0}\frac{f\left((x_1+x_2+\dots+x_n)^i\right)}{n^i}}\\
E[W]
=&n\int_{0}^{\infty}\sum_{t\ge0}\frac{f\left((x_1+\dots+x_n)^t\right)}{t!}y^te^{-ny}dy \notag\\
=&n\int_{0}^{\infty}f\Bigl(\sum_{t\ge0}\frac{(x_1y+\dots+x_ny)^t}{t!}\Bigr)e^{-ny}dy \notag\\
=&n\int_{0}^{\infty}f\left(\exp(x_1y+\dots+x_ny)\right)
e^{-ny}dy \label{eq:expsum}
\end{align}
evaluated at $x_1=\dots=x_n=1$.

We next need to find the sum of the monomials in the polynomial
expansion of $\exp(x_1+\dots+x_n)$ that should be removed under $f$.
%The form $\prod_{i=1}^{n}(S_{u_i}(x_i)-S_{v_i}(x_i))$ is the
%sum of all monomials in $\exp(x_1+x_2+\dots+x_n)$ with the exponent
%$w_i$ of $x_i$ satisfying $v_i\le w_i<u_i$, for all $i=1,2,\dots,n$.
If we choose integers $0=i_0\le  i_1\le\dots\le i_A\le
i_{A+1}=n$, such that $i_{j}\ge k_{j}$ for $j=1,\dots,A$, and
then partition indices $\{1,\dots,n\}$ into $(A+1)$ subsets
$\mathcal{I}_1,\dots,\mathcal{I}_{A+1}$, where
$\mathcal{I}_j(j=1,\dots,A+1)$ has $i_{j}-i_{j-1}$ elements. Then
\begin{equation}\label{eq:expprodform}
\prod_{j=1}^{A+1}\prod_{i\in\mathcal{I}_{j}}(S_{m_{j-1}}(x_i)-S_{m_j}(x_i))
\end{equation}
equals the sum of all monomials in $\exp(x_1+\dots+x_n)$ with
($i_{j}-i_{j-1}$) of the $n$ exponents smaller than $m_{j-1}$ but
greater than or equal to $m_{j}$, for $j=1,\dots,A+1$. (Here $S$ is
as defined by (\ref{eq:sm_m}-{\ref{eq:sm_0}}).) The number of such
partitions of $\{1,\dots,n\}$ is equal to
${{n}\choose{n-i_A,\dots,i_2-i_1,i_1}}=\prod_{j=0}^{A}{{i_{j+1}}\choose
{i_{j}}}$. Finally, we need to sum the terms of the form
(\ref{eq:expprodform}) over all partitions of all choices of
$i_1,\dots,i_A$ satisfying $k_{j}\le i_{j}\le i_{j+1}$ for
$j=1,\dots,A$:
\begin{align}\label{eq:fform}
&f\left(\exp(x_1y+\dots+x_ny)\right)|_{x_1=\dots=x_n=1}
=\exp(ny)-\notag\\
&\sum_{{{(i_0,i_1,\dots,i_{A+1}):\atop i_0=0,i_{A+1}=n}\atop k_j\le
i_j\le i_{j+1}}\atop
j=1,2,\dots,A}\prod_{j=0}^{A}{{i_{j+1}}\choose{i_j}}\left[S_{m_{j}}(y)-S_{m_{j+1}}(y)\right]^{i_{j+1}-i_{j}}.
\end{align}
Bringing (\ref{eq:fform}) into (\ref{eq:expsum}) gives our result in
Theorem \ref{thm:gen_wait}.

%\appendix

\begin{comment}
\section{Resilience vs.\ Computational Complexity Tradeoff}
Niu et al.\cite{niudiresilience} has studied the tradeoff between
computational complexity (generation size) and the resilience of a
peer-to-peer content distribution network to peer dynamics, that is,
peers' arrival and departure, and even the server's departure. Their
analysis was based on the expected behavior of the network in steady
state.

Consider a network of one server and $m$ peers. Niu et al. proposed
``block variation'' to measure resilience of the network. Denote the
number of coded symbols from the same generation held by all peers
as $B$. Block variation is defined as \[Var[B]/E^2[B].\]

\subsection{Coupon Collection Model}
Consider sampling a population of $n$ distinct elements.  Let
$B_1(t), B_2(t), \dots, B_n(t)$ be the number of samplings of the
 $i$th element after $t$ samplings. We are interested in
$Var[B_i(t)]/E^2[B_i(t)]$.
\end{comment}

\end{document}